\newcommand{\M}{\ensuremath{\mathcal{M}}\xspace}
\newcommand{\R}{\ensuremath{\mathcal{R}}\xspace}
\newcommand{\theadvisor}{the{\bf advisor}\xspace}
\newcommand{\RWR}{\textsc{PaperRank}\xspace}
\newcommand{\WRWR}{\textsc{DaRWR}\xspace}
\newcommand{\WKatz}{\textsc{DaKatz}\xspace}
\newcommand{\Katz}{\textsc{Katz}\xspace}
\newtheorem{definition}{Definition}
\newtheorem{theorem}{Theorem}
\newcommand{\union}{\cup}
\let\c@table\c@figure
\begin{document}

\title{Recommendation on Academic Networks using\\ Direction Aware Citation Analysis}

\numberofauthors{1}
\author{
Onur K\"{u}\c{c}\"{u}ktun\c{c}$^{1,2}$, Erik Saule$^1$, Kamer Kaya$^1$, \"{U}mit V. \c{C}ataly\"{u}rek$^{1,3}$\\
\affaddr{$^1$ Dept. Biomedical Informatics, The Ohio State University}\\
\affaddr{$^2$ Dept. Computer Science and
Engineering, The Ohio State University}\\
\affaddr{$^3$ Dept. Electrical and Computer Engineering, The Ohio State University}\\
\email{\{kucuktunc,kamer,esaule,umit\}@bmi.osu.edu}\\
}

\sloppy
\maketitle

\begin{abstract}
The literature search has always been an important part of an academic
research. It greatly helps to improve the quality of the research
process and output, and increase the efficiency of the researchers in
terms of their novel contribution to science. As the number of published 
papers increases every year, a manual search becomes more exhaustive even 
with the help of today's search engines since they are not specialized for 
this task. In academics, two relevant papers do not always have to share 
keywords, cite one another, or even be in the same field. Although a 
well-known paper is usually an easy pray in such a hunt, relevant papers 
using a different terminology, especially recent ones, are not obvious to 
the eye. 

In this work, we propose paper recommendation algorithms by using the
citation information among papers. The proposed algorithms are {\em
direction aware} in the sense that they can be tuned to find either
recent or traditional papers. The algorithms require a set of papers
as input and recommend a set of related ones. If the user wants to
give negative or positive feedback on the suggested paper set, the
recommendation is refined.  The search process can be easily guided in
that sense by relevance feedback.  We show that this slight guidance
helps the user to reach a desired paper in a more efficient way. We
adapt our models and algorithms also for the venue and reviewer
recommendation tasks. Accuracy of the models and algorithms is
thoroughly evaluated by comparison with multiple baselines and
algorithms from the literature in terms of several objectives specific
to citation, venue, and reviewer recommendation tasks. All of these
algorithms are implemented within a publicly available web-service
framework which currently uses the data from
DBLP\footnote{\url{http://dblp.uni-trier.de}} and
CiteSeer\footnote{\url{http://citeseer.ist.psu.edu/}} to construct the
proposed citation graph.
\end{abstract}

\category{H.3.3}{Information Storage Systems}{Information Search and
  Retrieval}
\category{H.3.3}{Information Storage Systems}{Online Information Services}
\terms{Algorithms, Experimentation}
\keywords{Literature search, graph, random walks, paper
  recommendation, web service}

\section{Introduction}
\label{sec:intro}

The academic community has published millions of research papers to
date and the number of new papers has been increasing with time. For
example, based on DBLP, computer scientists published 3 times more
papers in 2010 than in 2000~(see
Figure~\ref{fig:paper_count}-left). With more than one hundred
thousand new papers each year, performing a complete literature search
became a herculean task. A paper cites in average 20 other papers (see
Figure~\ref{fig:paper_count}-right), which means that there might be
more than a thousand papers that cite or are cited by any paper a
researcher write. Researchers typically rely on manual methods to
discover new research such as keyword-based search on search engines,
reading proceedings of conferences, browsing publication list of known
experts or checking the reference list of paper they are interested. 
These techniques are time-consuming and only allow to reach a limited
set of documents in a reasonable time. Developing tools that help
researchers find unknown and relevant papers will certainly
increase the productivity of the scientific community.

Some of the existing approaches and tools for the literature search
cannot compete with the size of today's literature. Keyword-based
approaches suffer from the confusion induced by different names of
identical concepts in different fields. (For instance, {\em partially
ordered set} or {\em poset} are also often called {\em directed
acyclic graph} or {\em DAG}). Hence, a researcher may not be
able to find the right paper even she is suggested to scan a long
list of papers by a keyword-based approach. Conversely, two different
concepts may have the same name in different fields (for instance,
{\em hybrid} is commonly used to specify software hybridization,
hardware hybridization or algorithmic hybridization) and such homonyms
may drastically increase the number of suggested but unrelated papers.
Some publishers and digital libraries automatically suggest papers to
authors; however, their suggestions are usually based on the
publication history of the researcher which may not match with her
current interests.

\begin{figure}[t]
\begin{center}
\includegraphics[width=0.51\columnwidth,page=1]{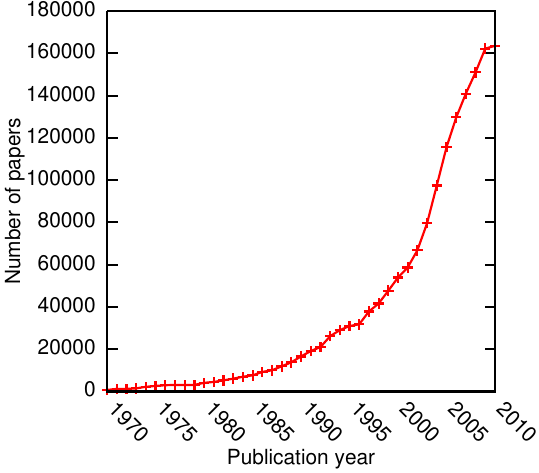}
\includegraphics[width=0.47\columnwidth,page=2]{fig1.pdf}
\vspace{-2em}
\caption{Number of new papers published each year based on DBLP (left), and 
number of papers with given citation and reference count (right).}
\label{fig:paper_count}
\end{center}
\end{figure}

To achieve this goal, we built a publicly available web service called 
\theadvisor\footnote{\url{http://theadvisor.osu.edu/}}. It takes a 
bibliography file
containing a set of papers, i.e., {\it seeds}, as an input to initiate
the search. The user can specify that she is interested in classical
papers or in recent papers. Then, the service returns a set of
suggested papers ordered with respect to a ranking function. The user
can guide the search or {\it prune} the list of suggested papers with
a positive or negative feedback by declaring a subset {\em relevant}
or {\em irrelevant}. In this case, the service completely refines the
set and shows the new results back to the researcher. In addition to
papers, the service also suggests researchers or experts, and
conferences or journals of interest. We believe that it will be a
valuable asset of a researcher while performing several tasks, such
as:
\begin{itemize}[itemsep=2pt,parsep=2pt,leftmargin=1.2em]
\item searching the literature in any topic she is interested, 
\item finding recent or traditional papers related to a problem,
\item improving the reference list of a manuscript being written, 
\item finding conferences and journals for attendance, subscription,
  or paper submission,
\item finding a set of researchers in a field of interest to follow their work, 
\item finding a list of potential reviewers, which is required by
  certain journals in the submission process.
\end{itemize}

The service uses the bibliographical information while suggesting
relevant papers, venues, and people to the researcher. For each paper, it uses the
authorship and venue information in addition to the list of papers it
cites. The service works on a modified version of the {\it citation
graph} which is constructed by using this information. In other
words, the service recommends papers, experts, and venues using citation
analysis. We do not take the textual data into account because our
aim is finding all conceptually related and high quality documents
even they use a different terminology. It has been shown that
text-based similarity is not sufficient for this task and that most of
the relevant informations are contained within the citation
graph~\cite{Strohman07}. Besides, it is plausible that there is
already a correlation between citation similarities and text
similarities of the papers~\cite{Salton63}.

Our aim in this work is to evaluate the existing algorithms and to
explain the new algorithms that power our service. We distinguish two
types of algorithms in the literature. Some algorithms (such as
Cocitation~\cite{Small73}, Cocoupling~\cite{Kessler63} and
CCIDF~\cite{Lawrence99}) only use direct citations and references of
the seed papers. Other methods (such as PaperRank~\cite{Gori06} and
Katz~\cite{Strohman07}) perform a deep search of the citation graph by
traversing all its edges; they are often said to be eigenvector
based. However, none of these methods allow explicitly to search the
paper space looking for either old or recent papers.

In this work, we present the class of {\em direction aware}
algorithms. They feature a parameter which allows to give more
importance to either the citation of papers or their references. This
parameter makes the citation suggestion process easily tunable
for finding either recent or traditional relevant papers. In
particular we extend two eigenvector based methods into direction aware
algorithms, namely~\WRWR and \WKatz.

This paper presents an evaluation of the existing and proposed
algorithms for citation recommendation under the light of link
prediction and citation patterns. We also investigate the potential
of the positive and negative feedback mechanism our service
exposes. Finally we show that citation recommendation can be used to
recommend venues and reviewers better than methods commonly used by
researchers.

The paper is organized as follows: In Section~\ref{sec:rw}, we briefly
present a survey for related work. The problems and the methods are
formally presented in Section~\ref{sec:method}. The accuracy of the
methods is experimentally analyzed in Section~\ref{sec:exp}. 
Section~\ref{sec:conclusion} discusses about future work and concludes 
the paper.

\section{Related work}
\label{sec:rw}
Citation analysis has been successfully used for various tasks
including expert finding~\cite{Bogers08}, academic evaluation of
researchers, conferences, journals and
papers~\cite{Garfield79,Hirsch05}, context-aware citation
recommendation~\cite{He10}, and impact prediction~\cite{shi10}.

There are various citation analysis-based paper recommendation methods
depending on a pairwise similarity measure between two papers. Bibliographic
coupling, which is one of the earliest works, considers papers having
similar citations as related~\cite{Kessler63}. Another early work, the
Cocitation method, considers papers which are cited by the same
papers as related~\cite{Small73}. A similar cites/cited approach by using
collaboration filtering is proposed by McNee~et~al.~\cite{McNee02}.
Another method, common citation $\times$ inverse document frequency~(CCIDF)
also considers only common citations, but by weighting them
with respect to their inverse frequencies~\cite{Lawrence99}. 

More recent works define different measures such as Katz which is
proposed by Liben-Nowell and Kleinberg for a study on the link
prediction problem on social networks~\cite{Liben-Nowell03} and used
later for information retrieval purposes including citation
recommendation by Strohman~et~al.~\cite{Strohman07}. For two papers in
the citation network, the Katz measure counts the number of paths by
favoring the shorter ones. Lu~et~al. stated that both bibliographic
coupling and Cocitation methods are only suitable for special cases
due to their very local nature~\cite{Lu06}. They proposed a method
which computes the similarity of two papers by using a vector based
representation of their neighborhoods in the citation network and
compared the method with CCIDF. Liang~et~al. argued that most of the
methods stated above considers only direct references and citations
alone~\cite{Liang11}. Even Katz and the vector based method
of~\cite{Lu06} consider the links in the citation network as simple
links. Instead, Liang~et~al. added a weight attribute to each link and
proposed the method Global Relation Strength which computes the
similarity of two papers by using a Katz-like approach.

Many works use random walk with restarts~(RWR) for citation
analysis~\cite{Gori06,Ma08,Li09,Lao10}. RWR is a well known and
efficient technique used for different tasks including computing the
relevance of two vertices in a graph~\cite{Pan04}. It is very similar
to the well known PageRank algorithm which is used by Both Li and
Willett~\cite{Li09}~(ArticleRank) and Ma~et~al.~\cite{Ma08} to
evaluate the importance of the academic papers. Gori and
Pucci~\cite{Gori06} proposed an algorithm PaperRank for RWR-based
paper recommendation which can also be seen as a Personalized PageRank
computation~\cite{Jeh2003} on the citation graph. Lao and
Cohen~\cite{Lao10} also used RWR for paper recommendation in citation
networks and proposed a learnable proximity measure for weighting the
edges by using machine learning techniques.

As far as we know, none of these works study the recent/traditional
paper recommendation problem. The closest work is Claper~\cite{Wang10}
which is an automatic system that measure how much a paper is
classical, allowing to rank a list of paper to highlight the most
classical ones.

\section{Problems and methods}
\label{sec:method}
Let $G=(V,E)$ be the \emph{citation graph}, with $n$ papers
$V=\{v_1,\ldots,v_n\}$. In $G$, each directed edge $e=(v_i,v_j)\in E$
represents a \emph{citation} from $v_i$ to $v_j$. For the rest of the
paper, we use the phrases \emph{``references of $v$''} and
\emph{``citations to $v$''} as to describe the graph around vertex $v$
(see Figure~\ref{fig1}). We use $deg^-(v)$ and $deg^+(v)$ to denote
the number of references of and citations to $v$, respectively. 

\begin{figure}[h]
\begin{center}
\includegraphics[width=0.8\columnwidth]{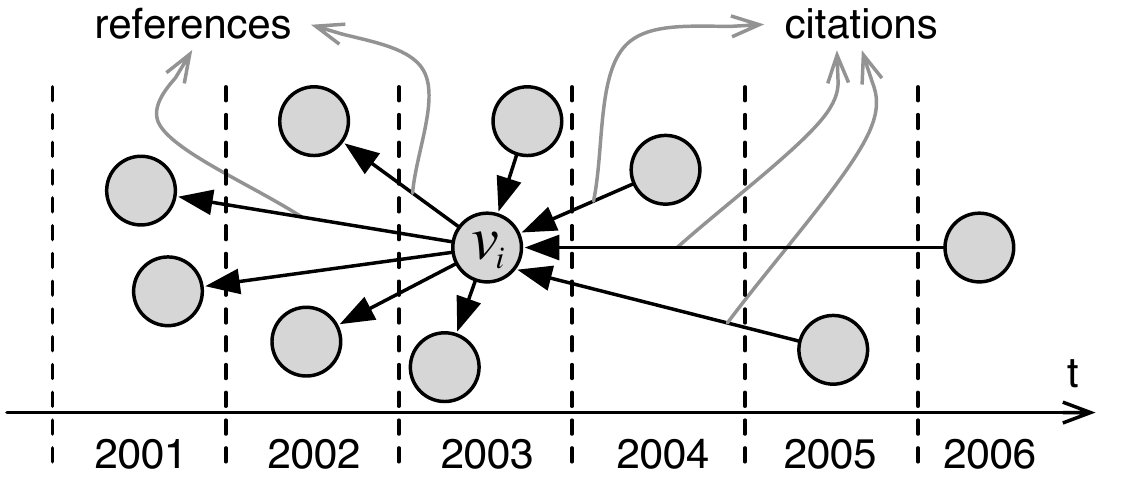}
\caption{Citation graph around a paper $v_i$ with references and 
citing papers.}
\label{fig1}
\end{center}
\end{figure}

In this work, we consider three query types:
\begin{itemize}
\item {\bf Paper recommendation (PR):} Given a set of $m$ seed papers $\M
= \{p_1,\ldots,p_m\}$ and a parameter $k$ s.t. $\M \subset V$, return
top-$k$ papers which are relevant to the ones in $\M$. 

\item {\bf Venue recommendation (VR):} Given a set of $m$ seed papers 
$\M=\{p_1,\ldots,p_m\}$ and a parameter $k$, return top-$k$ venues
related to the papers in $\M$.

\item {\bf Expert recommendation (ER):} Given a set of $m$ seed papers 
$\M=\{p_1,\ldots,p_m\}$ and parameter $k$, return top-$k$ experts
studying on topics related to the papers in $\M$.
\end{itemize}

These query definitions are generic. They can be used for various
academic tasks by the researchers. In this paper, we target
the manuscript preparation and submission process since all
of queries above are useful in this process: executing a PR query is
a very efficient way of finding overlooked citations in a manuscript with
the cited papers as the input $\M$. VR queries are useful
while deciding the conference or journal for submission. And ER queries are useful while submitting a manuscript to some journals which require a set of names of potential reviewers.

\subsection{Citation recommendation}
\label{sec:cr}

\subsubsection{Random walk with restart}

\RWR is based on random walks in the citation graph $G$. The current
structure of $G$ is not suitable for finding recent and relevant
papers since such papers have only a few incoming edges. Moreover, since the
graph is acyclic, all random walks will end up on old papers. To
alleviate this, given a PR query with inputs $\M$ and $k$, \RWR
constructs a directed graph $G'=(V',E')$ by slightly modifying the
citation graph $G$ as follows:

\begin{itemize}[itemsep=2pt,parsep=2pt]
\item A source node $s$ is added to the vertex set:
\begin{align*}
V'&=V\cup\{s\}
\end{align*}
\item Back-reference edges~($E_{b}$), the edges from $s$ to
seed papers~($E_{f}$), and restart edges from $V$ to $s$~($E_r$) are added to the graph:
\begin{align*}
E_b&=\{(y,x):(x,y)\in E\}\\
E_f&=\{(s,v):v\in\M\}\\
E_r&=\{(v,s):v\in V\}\\
E'&=E\cup E_b\cup E_f\cup E_r
\end{align*}
\end{itemize}

\begin{figure}[hb]
\begin{center}
\includegraphics[width=0.7\columnwidth]{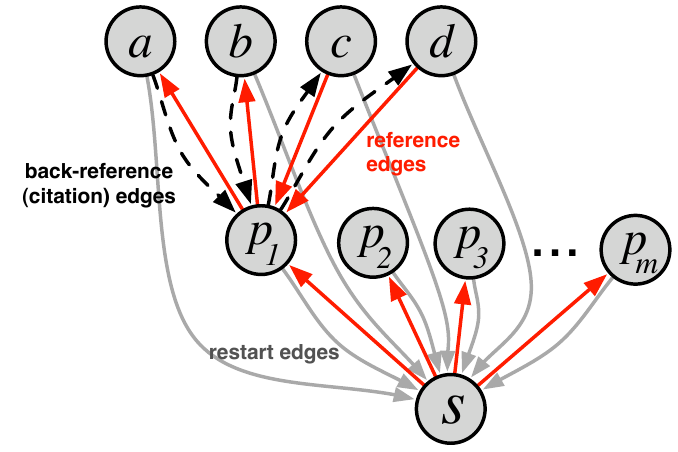}
\caption{Citation graph with source node $s$ and
  seed set $\M=\{p_1,\ldots,p_m\}$. The papers $a$ and $b$ are cited by $p_1$,
  where $c$ and $d$ cites $p_1$. Note that there is a corresponding
  back-reference edge for every reference.}
\label{fig2}
\end{center}
\end{figure}

The new directed graph $G'$ has \emph{reference}~(red),
\emph{back-reference}~(dashed), and \emph{restart}~(gray) edges~(see
Figure~\ref{fig2}). In this model, the random walks are directed
towards both references and citations of the papers. In addition, the
restarts from the source vertex $s$ will be distributed to only the
seed papers in $\M$. Hence, random jumps to any paper in the
literature are prevented. We assume that a random walk ends in $v$
continues with a neighbor with a damping factor $d \in (0,1]$. And with probability
$(1-d)$, it restarts and goes to the source $s$.  Let $R_{t-1}(v)$ be the
probability of a random walk ends at vertex $v \neq s$ at iteration
$t-1$. Let $C_t(v)$ be the contribution of $v$ to one of its neighbors
at iteration $t$. In each iteration, $d$ of $R_{t-1}(v)$ is
distributed 
among its references and citations equally. Hence,
\begin{equation}
C_t(v) = d \frac{R_{t-1}(v)}{deg^+(v) + deg^-(v)}.\label{eq:contrib}
\end{equation}

Initially, a probability score of $1$ is given to the source node,
meaning that a researcher expands the bibliography starting with the
paper itself:
\begin{equation}
R_0(x)=\begin{cases}
1, & \text{if $x=s$}\\
0, & \text{otherwise}
\end{cases}
\label{eqn:prinit}
\end{equation}
where $R_0$ is the probability at $t = 0$. The $\RWR$ algorithm
computes the probability of a vertex $u$ at iteration $t$ as
\begin{equation}
R_t(u)=\begin{cases}
(1-d)\sum_{v \in V} R_{t-1}(v), & \text{if $u=s$}\\
\sum_{(u,v)\in E}{C_{t}(v)} + \frac{R_{t-1}(s)}{|{\M}|}  , &
\text{if $u \in \M$}\\
\sum_{(u,v)\in E}{C_{t}(v)}, & \text{otherwise.}
\end{cases}
\label{eqn:pr}
\end{equation}

The $\RWR$ algorithm converges when the probability of the papers are
stable, i.e., when the process is in a {\it steady state}. Let
$$\Delta_{t} = (R_t(u_1)-R_{t-1}(u_1), \ldots,R_t(u_n)-R_{t-1}(u_n))$$
be the difference vector. We say that the process is in the steady
state when the L2 norm of $\|\Delta_t\|$ is smaller than given value $\epsilon$. That is, $$\|\Delta_{t}\| = \sqrt{\sum_{u\in V}
\left(R_t(u)-R_{t-1}(v)\right)^2} < \epsilon.$$ For a given set of
initial papers $\M$, and parameters $d$ and $\epsilon$, suppose the
algorithm converges.
\begin{definition}
  The {\it relevance score} of a paper $u$ with respect to the seed
  papers is equal to the steady state probability $R(u)$.
\end{definition}
We choose the top-$k$ non-seed papers with the highest relevance
scores as the initial recommended paper set $\R_{paper}$.

\begin{theorem}
The $\RWR$ algorithm converges to a steady state in a finite number
iterations. Furthermore, there is only one steady state distribution
and hence, the relevance scores are unique.
\end{theorem}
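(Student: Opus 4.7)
The plan is to recast the iteration~\eqref{eqn:pr} as a finite Markov chain on the state space $V' = V \cup \{s\}$ and then invoke the standard ergodic theorem. Writing~\eqref{eqn:pr} in matrix form gives $R_t = P R_{t-1}$, where $P[u,v] = d/(deg^+(v)+deg^-(v))$ whenever $u$ is a neighbor of $v$ in $G'$ through $E \cup E_b$, $P[s,v] = 1-d$ for $v \ne s$, $P[p,s] = 1/|\M|$ for $p \in \M$, and $P[u,s]=0$ otherwise. Each column sums to $1$ by~\eqref{eq:contrib}, so $P$ is column-stochastic and $R_t$ is a probability distribution for every $t$, since $R_0$ is one by~\eqref{eqn:prinit}.

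Next, I would show that $P$ restricted to the set $S \subseteq V'$ of states reachable from $s$ in $G'$ is both irreducible and aperiodic. For irreducibility: from any $v \in S\setminus\{s\}$, the restart edge $(v,s) \in E_r$ contributes probability $1-d>0$, so $v \to s$ directly; from $s$, each seed is reached with probability $1/|\M|>0$; and from the seeds, any $u \in S$ is reached via a path in $E \cup E_b$ along which every transition has positive probability, because $E_b$ symmetrizes $E$ and hence reachability in $G'$ within $V$ coincides with weak connectivity in $G$. For aperiodicity: the cycle $s \to p \to s$ (for any seed $p$) has length $2$, while $s \to p \to q \to s$ (for $q$ any neighbor of $p$ in $G'$) has length $3$, so the period of $s$ divides $\gcd(2,3)=1$.

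Given irreducibility and aperiodicity on $S$, the standard convergence theorem for finite ergodic Markov chains yields a unique stationary distribution $\pi$ supported on $S$ together with a strictly positive spectral gap, so $\|R_t - \pi\|_2 \le C\lambda^t$ for some $\lambda \in (0,1)$ and every initial distribution supported on $S$. Since $R_0$ concentrates all mass on $s \in S$, the iterates stay on $S$ forever, and every vertex outside $S$ retains probability $0$; its relevance score is therefore uniquely $0$. Geometric convergence gives $\|\Delta_t\|_2 \le (1+\lambda)C\lambda^{t-1} \to 0$, so for any $\epsilon>0$ the stopping criterion $\|\Delta_t\| < \epsilon$ is met after finitely many iterations, and uniqueness of $\pi$ on $S$ together with the forced zeros on $V'\setminus S$ yields uniqueness of the relevance scores.

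The main obstacle is verifying irreducibility, because the raw citation graph $G$ is a DAG that may split into many weakly connected components. The back-reference edges $E_b$ together with the restart and forward edges $E_r, E_f$ are precisely what makes the augmented graph strongly connected on $S$: the source $s$ acts as a hub, and $E_b$ turns directed citation chains into two-way paths so that seeds can reach every paper in their weakly connected component. The only real edge case is a seed with zero degree in $G$, for which~\eqref{eq:contrib} is ill-defined; under the mild convention that every seed has at least one reference or citation, the argument goes through without modification.
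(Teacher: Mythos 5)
Your proposal is correct and its core is the same as the paper's: restrict attention to the subgraph of $G'$ reachable from the source $s$, observe that the restart edges $E_r$ send every such vertex back to $s$ while $E_f$ and the reference/back-reference edges carry $s$ to every reachable vertex, conclude strong connectivity, and hence irreducibility and uniqueness of the stationary distribution of the induced Markov chain. Where you genuinely go beyond the paper is the aperiodicity step: the paper stops at irreducibility and asserts that ``the steady state exists and is unique,'' but irreducibility of a finite chain only guarantees a unique stationary vector, not that the power iteration $R_t = PR_{t-1}$ converges to it --- a periodic chain would oscillate and the criterion $\|\Delta_t\|<\epsilon$ need not ever be met. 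Your observation that $s$ lies on both a length-$2$ cycle $s\to p\to s$ and a length-$3$ cycle $s\to p\to q\to s$ forces the period to be $1$ and is exactly what licenses the geometric-convergence conclusion, so your argument actually closes a gap in the published proof rather than merely reproducing it. Your explicit handling of vertices outside the reachable set (score identically $0$) and of the degenerate case $deg^+(v)+deg^-(v)=0$, where \eqref{eq:contrib} is undefined and the length-$3$ cycle could fail to exist, are also details the paper glosses over. The only caveat is that your aperiodicity argument needs at least one seed with a nonzero degree in $G$, which you correctly flag as a modeling convention rather than a flaw in the reasoning.
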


\begin{proof}
Consider the subgraph $H = (V_H, E_H) \subseteq G'$ induced by the
source $s$ and all vertices reachable from the source. That is, $V_H =
\{u \in V': R_t(u) > 0\}$ and $E_H = (V_H \times V_H) \cap E'$. For each
$u \in V_H \setminus \{s\}$ there is a directed edge $(u,s)$ and a
directed path $s \rightarrow u$. Hence, each vertex pair in $V_H$ is
connected to each other and $H$ is strongly connected. Thus, the
transition matrix of the corresponding Markov chain is irreducible. 
Hence, the steady state exists and is unique. \qed
\end{proof}

\subsubsection{Direction aware random walk with restart}

A random walk with restart is a good way to find relevance scores of
the papers. However, the $\RWR$ algorithm treats the citations and
references in the same way. This may not lead the researcher to recent
and relevant papers if she is more interested with those. Old and well
cited papers have an advantage with respect to the relevance scores
since they usually have more edges in $G'$. Hence $G'$ tends to have
more and shorter paths from the seed papers to old papers. We define a
{\it direction awareness} parameter $\lambda \in [0,1]$ to obtain more
recent results in the top-$k$ documents. We then define two types of
contributions of each paper $v$ to a neighbor paper in iteration $t$:
\begin{align}
C^{+}_t(v) &= d \lambda  \frac{R_{t-1}(v)}{deg^+(v)}, \\
C^{-}_t(v) &= d (1-\lambda) \frac{R_{t-1}(v)}{deg^-(v)},
\end{align}
where $C^{-}_t(v)$ is the contribution of $v$ to a paper in its
reference list and $C^{+}_t(v)$ is the contribution of $v$ to a paper
which cites $v$. Hence, for a non-seed, non-source paper $u$, 
\begin{equation}
\hspace*{-1ex}R_t(u) = \sum_{(v,u)\in E_b}C^{+}_t(v)
+  \sum_{(v,u) \in E} C^{-}_t(v).
\label{eqn:pr2}
\end{equation}
For a seed node $u$, the $R_t(u)$ is computed similarly except that each 
seed node has an additional
$\frac{R_{t-1}(s)}{|{\M}|}$ in the equation. $R_t(s)$ is computed in
the same way as~\eqref{eqn:pr}. With this modification, the parameter
$\lambda$ can be used to give more importance either to traditional
papers with $\lambda \in [0,0.5]$ or recent papers with
$\lambda \in [0.5, 1]$. We call this
algorithm {\it direction aware random walk with
restart}~($\WRWR$).

Note that $\WRWR$~\eqref{eqn:pr2} has the \emph{probability leak}
problem when a paper has no references or citations. If this is the
case some part of its score will be lost at each iteration. For such
papers, we distribute the whole score from the previous iteration
towards only its references or citations.

\subsubsection{Katz and direction awareness}
\label{sec:da}
The direction awareness can be also adapted to other similarity
measures such as the graph-based Katz distance
measure~\cite{Liben-Nowell03} which was used before for the citation
recommendation purposes~\cite{Strohman07}. With Katz measure, the
similarity score between two papers $u,v \in V$ is computed as
$$Katz(u,v) = \sum_{i = 1}^L \beta^i |paths^i_{u,v}|,$$ where
$\beta\in [0,1]$ is the decay parameter, $L$ is an integer parameter,
and $|paths^i_{u,v}|$ is the number of paths with length $i$ between
$u$ and $v$ in the graph with paper and back-reference edges $G''=(V,
E \cup E_b)$. Notice that the path does not need to be elementary,
i.e., the path $uvuv$ is a valid path of length 3. Therefore the Katz
measure might not converge for all values of $\beta$ when $L =
\infty$. $\beta$ needs to be chosen smaller than the larger eigenvalue
of the adjacency matrix of $G''$. And in practice $L$ is set to a
fixed value (in our experiment $L = 10$).
In our context with multiple seed papers, the relevance of a paper $v$
is set to $R(v) = \sum_{u\in\M} Katz(u,v)$.\\

We extend the Katz distance by using direction awareness to weight the
contributions to references and citations differently with the
$\lambda$ parameter as in \WRWR:
$$DaKatz(u,v) = \sum_{i=1}^L \left[ \lambda\beta^i |Rpaths^i_{u,v}| +
  (1-\lambda)\beta^i |Cpaths^i_{u,v}| \right],$$ where
$|Rpaths^i_{u,v}|$ (respectively, $|Cpaths^i_{u,v}|$) is the number of
paths in which the last edge in the path is a reference edge of $E$
(respectively, a citation edge of $E_b$).

\subsection{Venue and Reviewer recommendation}
\label{sec:vr}

Given a VR query with inputs $\M$ and $k$, we execute the paper
recommendation process and obtain the relevance scores of all papers
in the database. The relevance score of each venue $\nu$ is computed
as the sum of relevance scores of all papers published in that venue,
i.e.,
$$R(\nu) = \sum_{\text{$u$ is published in $\nu$}}{R(u)}.$$
We then choose the top-$k$ venues with the highest relevance
scores as the suggestion set $\R_{venue}$.

Similarly, given an ER query with inputs $\M$ and $k$, we execute the paper
recommendation process and obtain the relevance scores of all papers in the
database. The relevance score of each expert $\alpha$ is computed as
the sum of relevance scores of all papers written by $\alpha$, i.e.,
$$R(\alpha) = \sum_{\text{$u$ is written by $\alpha$}}{R(u)}.$$ We
then choose the top-$k$ researchers with the highest relevance scores
as the suggestion set $\R_{expert}$.

\section{Experiments}
\label{sec:exp}

We carefully evaluate the accuracy of the proposed direction aware
algorithms by comparing them with existing baselines and
algorithms. Here, we give the details and results of these
experiments.

\begin{figure}[t]
\centering
\includegraphics[width=.9\linewidth]{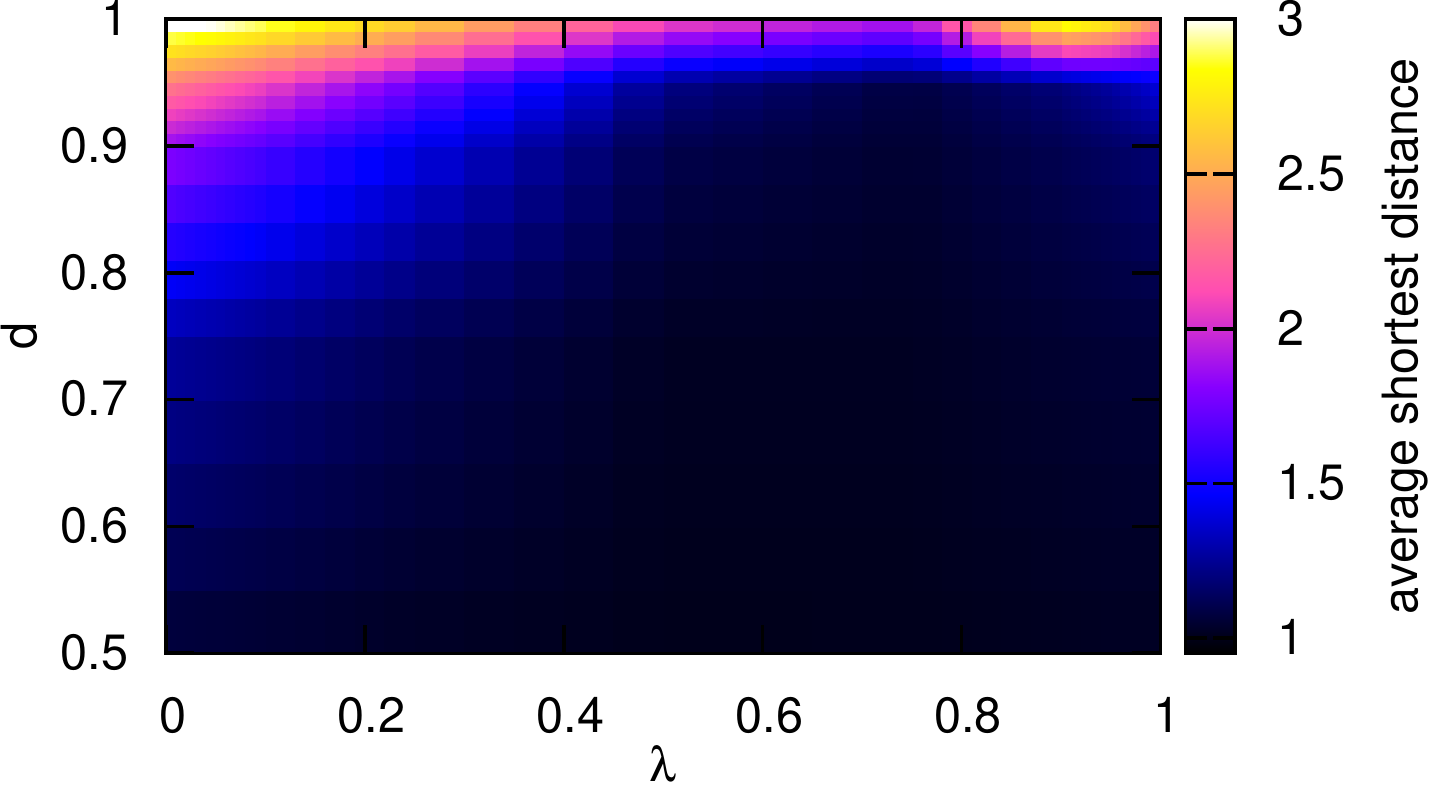}
\caption{Average shortest distance of top-10 recommendations by $\WRWR$
  from seed papers based on the parameters $d$ and $\lambda$.}
\label{fig:levels}
\end{figure}

\subsection{Dataset collection}
\label{sec:data}

The retrieval of bibliographic information and citation graph generation 
is a difficult task since academic papers are generally copyrighted and 
they are accessible through publishers' digital libraries. The usage
of such data is usually not explicitly granted, therefore, we limited
our study to data with license compatible with data mining.

We retrieved informations about 1.75M~(as of Dec 2011) computer science 
articles from DBLP~\cite{Ley09}. This data is well-formatted, author names 
are disambiguated; however, it does not contain any reference information.
On the other hand, CiteSeer
contains reference information but most of its data are automatically
generated~\cite{Giles98} and are often erroneous. We mapped each
document in CiteSeer to at most one document in DBLP by using the
title information~(using an inverted index on title words and
Levenshtein distance) and by their years. When two documents in
CiteSeer map to the same document in DBLP, their citation information
are merged. From the 1,748,199 documents references in DBLP, only
295,317 are properly associated with a reference in CiteSeer written
by 1,028,288 authors. The graph has 1,601,067 citation edges. Notice
that a mapping between CiteSeer data and DBLP data has been computed
before using canopy clustering with three times higher
coverage~\cite{Pham10}. Although we could not match a that much of the
data, we believe the data are enough to derive meaningful conclusions.

\subsection{Citation recommendation experiments}
\label{sec:expcr}

\subsubsection{Parameter tests}

Before performing a comparison of the different methods presented in
the paper, we study the impact of the damping factor $d$ and the
direction awareness parameter $\lambda$ on the recommendations given
by the \WRWR algorithm. In particular, we want to verify that changing
these parameters allows the user to obtain suggestions that are
farther away from the seed papers $\M$ and to obtain suggestions that
are either recent or more traditional. To verify these effects, a
source paper published between 2005 and 2010 is randomly selected and
the paper's references are used as the seed papers. We use the top-10
results as the set of suggestions. The test is repeated 500 times.

Figure~\ref{fig:levels} shows the impacts of parameters $d$ and
$\lambda$ as a heat map on the average shortest distance in the
citation graph between the recommended papers $\R_{paper}$ and the
seed papers $\M$. When $d$ increases, the probability that the
random research jumps back to the source node $s$ is
reduced. Therefore, the distant vertices are visited with more
probability between two successive restarts, resulting in papers away
from $\M$ being more likely to be in
$\R_{paper}$. Figure~\ref{fig:levels} shows that $\lambda$ makes
little difference in the average distance to the seed papers. However,
setting a higher value of $d$ should allow to find relevant papers
whose relation to the seeds are not obvious.

\begin{figure}[t]
\centering
\includegraphics[width=.9\linewidth]{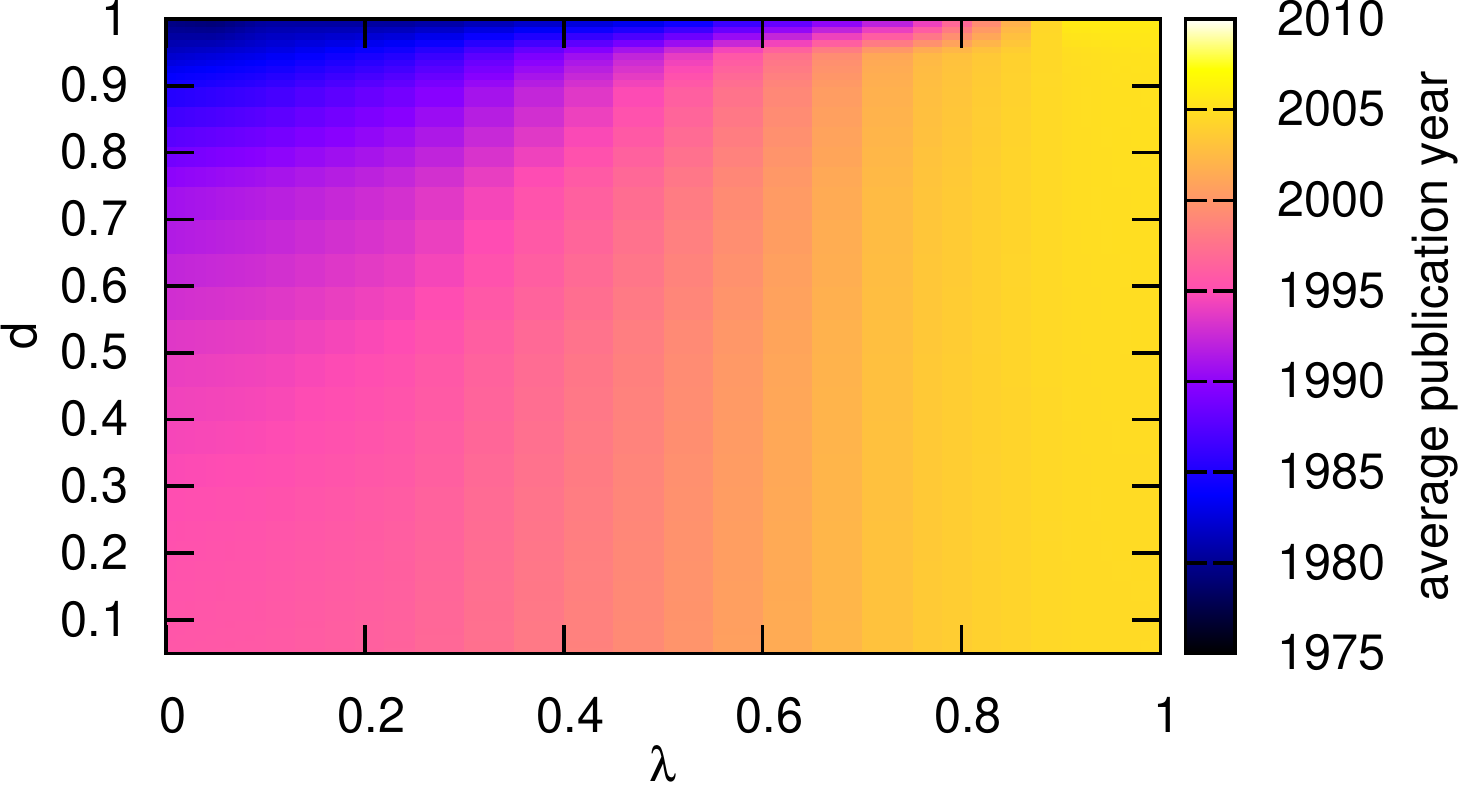}
\caption{Average publication years of top-10 recommendations by
  $\WRWR$ based on the parameters $d$ and $\lambda$.}
\label{fig:years}
\end{figure}

Figure~\ref{fig:years} shows the impacts of parameter $d$ and
$\lambda$ on the average year of the recommended papers in
$\R_{paper}$ as a heat map. Increasing the damping factor leads to
earlier papers since they tend to accumulate more citations. But for a
given $\lambda$, varying the damping factor do not allow to reach a
large diversity of time frames. The direction awareness parameter
$\lambda$ can be adjusted to reach papers from different years with a
range from late 1980's to 2010 for almost all values of $d$. In our
online service, the parameter $\lambda$ can be set to a value of
user's preference. It allows the user to obtain recent papers by
setting $\lambda$ close to $1$ or finding older papers by setting
$\lambda$ close to $0$.

Overall, first-level papers are often returned for $d<0.8$; yet many
papers at distance 2 and more appear. Also, it is possible to choose
between traditional papers~(by setting $\lambda<0.4$) or recent 
papers~(by setting $\lambda>0.8$) thanks to the direction awareness
parameter.

\subsubsection{Experimental settings}

\begin{table}[t]
\caption{Parameters used in the experiments.}
\addtolength{\tabcolsep}{-2pt}
\begin{center}
\begin{tabular}{|l|l|l|l|l|}
\hline
{\bf Method} & {\bf Random} & {\bf Recent} & {\bf Earlier} & {\bf Future}\\ \hline
Katz$_\beta$ & \multicolumn{4}{c|}{$\beta=0.0005$}\\ \hline
\multirow{2}{*}{$\WKatz$} & $\beta\!=\!0.005$ & $\beta$=$0.0005$ & $\beta$=$0.0005$ & $\beta$=$0.005$\\
~ & $\lambda=0.25$ & $\lambda=0.75$ & $\lambda=0$ & $\lambda=0.25$ \\ \hline
$\RWR$ & $d=0.5$ & $d=0.9$ & $d=0.9$ & $d=0.75$\\ \hline
\multirow{2}{*}{$\WRWR$} & $\lambda=0.5$ & $\lambda=0.9$ & $\lambda=0.1$ & $\lambda=0.5$ \\
~ & $d=0.75$ & $d=0.5$ & $d=0.5$ & $d=0.75$ \\ \hline
\end{tabular}
\end{center}
\label{tab:algos}
\end{table}%

\begin{figure*}[t]
\subfigure[hide random]{
\includegraphics[height=4.2cm,page=1,trim=0 0 2cm 0,clip=true]{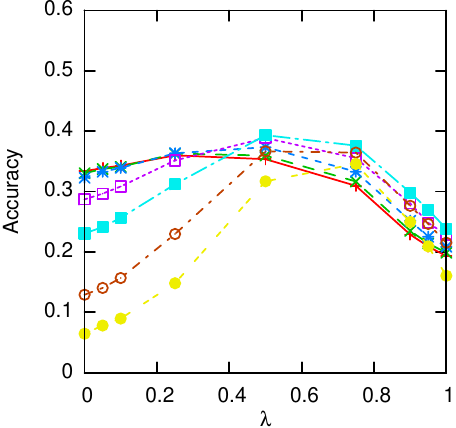}}
\subfigure[hide recent]{
\includegraphics[height=4.2cm,page=1,trim=0.8cm 0 2cm 0,clip=true]{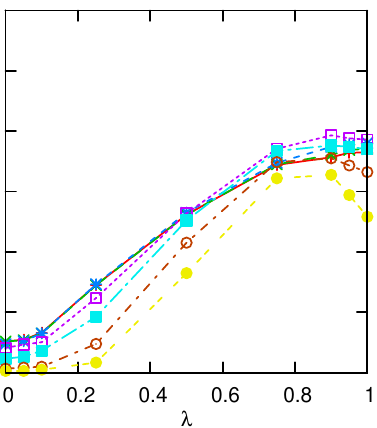}}
\subfigure[hide earlier]{
\includegraphics[height=4.2cm,page=1,trim=0.8cm 0 2cm 0,clip=true]{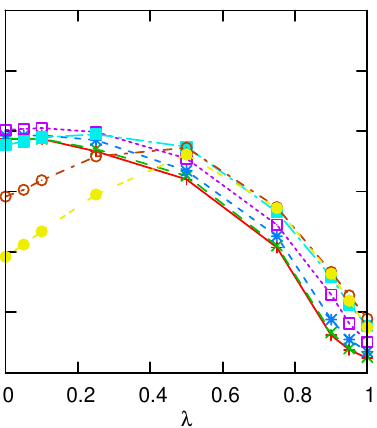}}
\subfigure[future prediction]{
\includegraphics[height=4.2cm,page=1,trim=0.8cm 0 2cm 0,clip=true]{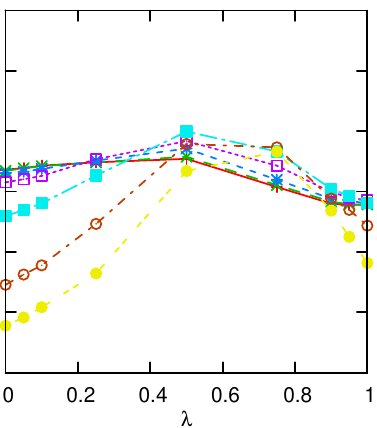}}
\subfigure{
\includegraphics[height=4.2cm,page=1,trim=5.4cm 0 0 0,clip=true]{fig9.pdf}}
\vspace{-1em}
\caption{Accuracy of $\WRWR$ method with different $\lambda$ and $d$ 
parameters on different experiments.}
\label{fig:expRWR}
\end{figure*}

\begin{figure*}[t]
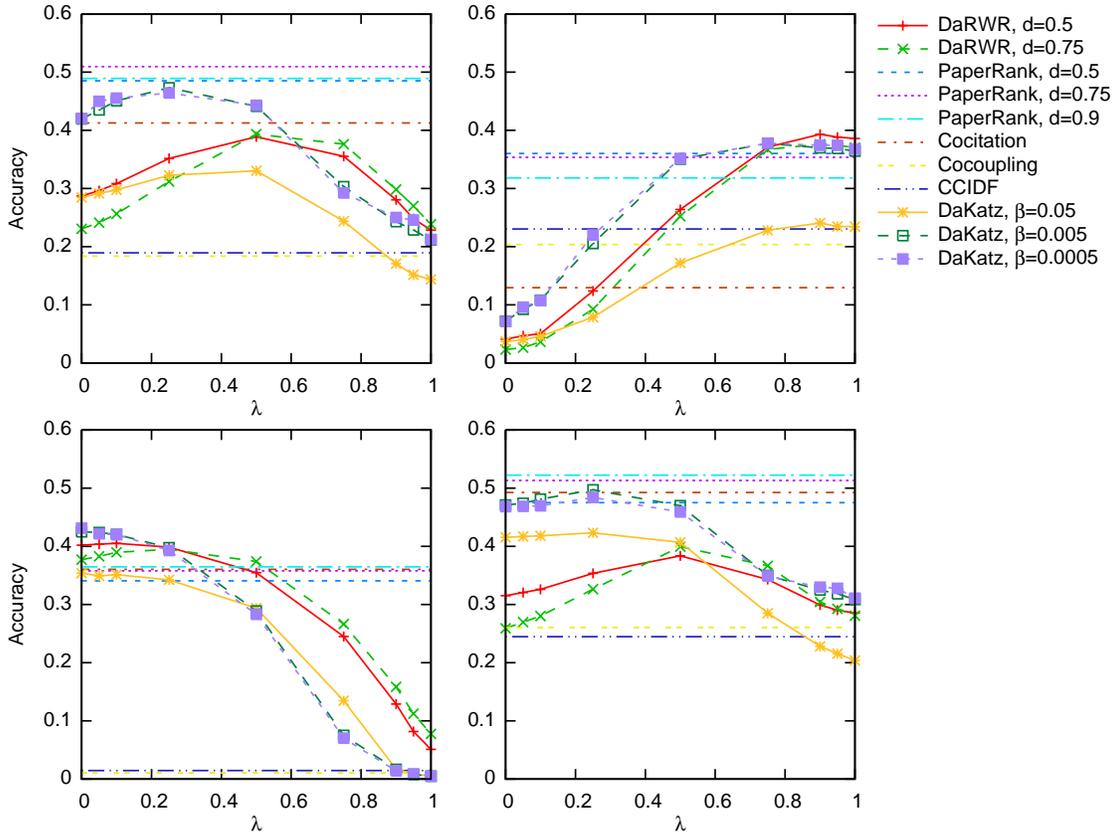

\includegraphics[height=5.5cm,page=2,trim=-1.3cm 0 2.5cm 0,clip=true]{fig6.pdf}
\includegraphics[height=5.5cm,page=2,trim=0.3cm 0 0 0,clip=true]{fig7.pdf}\\
\includegraphics[height=5.5cm,page=2,trim=-1.3cm 0 2.5cm 0,clip=true]{fig8.pdf}
\includegraphics[height=5.5cm,page=2,trim=0.3cm 0 2.5cm 0,clip=true]{fig9.pdf}
\vspace{-1em}
\caption{Accuracy of the algorithms on (top left) hide random, (top right) hide recent, 
(bottom left) hide earlier, and (bottom right) future prediction experiments based on $\lambda$ 
and other parameters. Note that the accuracy of \Katz is equal to \WKatz
at $\lambda=0.5$.}
\label{fig:expALL}
\end{figure*}

We test the quality of the recommended citations by different
methods in four different scenarios.

{\bf Hide random} scenario represents the typical use-case
  where a researcher is writing a paper and trying to find some more
  references. To simulate that, a source paper $s$ with enough
  references ($deg^+(s) \ge 20$) is randomly selected from the
  papers published between 2005 and 2010. Then 
  we remove $s$ and all the papers published after $s$ from the
  graph (i.e., $G_s = (V_s,E_s)$ where $V_s \subset V \setminus \{s\}$
  and $\forall v \in V_s , year[v]\le year[s]$), simulating the time
  when $s$ was being written. Out of $deg^+(s)$, $10\%$ of the
  references are randomly put in the hidden set $H$, and the rest is
  used as the seed papers (i.e., $\M=\{v \notin H:(s,v) \in E\}$). We
  compute the citation recommendations on $\M$ and report the average
  accuracy of finding hidden papers within the top $deg^+(s)$
  recommendations for 500 independent queries.

{\bf Hide recent} scenario represents another typical use-case 
where the author might be well aware of the literature of
  her field but might have missed some recent developments. It
  differs from {\bf hide random} while hiding the references. Here,
  the references that are put in $H$ are not chosen randomly. They are
  the most recent references. Again, the average accuracy of finding
  hidden papers within the top $deg^+(s)$ recommendations is reported
  for each source $s$.
  
In the {\bf hide earlier} scenario, the author is interested in finding
  some key papers related to the field. This scenario is exactly the
  opposite of {\bf hide recent}, i.e., the hidden papers are the
  oldest publications. The average accuracy of finding those hidden
  traditional papers within the top $deg^+(s)$ recommendations is
  reported for each source $s$.  

{\bf Future prediction} scenario investigates the accuracy of a recommendation system while providing a link between two papers which
  are not known to be related yet. It verifies if the
  algorithm can predict which paper will be cited by a given
  paper. For this test, the source paper $s$ is selected
  similarly. However, the graph selected for the recommendation
  include paper $s$ but exclude all subsequent papers (i.e., $G_s =
  (V_s,E_s)$ with $v \in V_s \iff year[v] \le year[s]$). And all the
  references of the $s$ are used as the seeds to obtain a top-10
  recommendations. The accuracy of the algorithm is estimated by
  counting how many of the documents that appear in the top-10 is
  later co-cited with the source paper.

The methods we proposed are compared on the three scenarios against
widely-used citation based approaches: bibliographic
coupling~\cite{Kessler63}, Cocitation~\cite{Small73},
CCIDF~\cite{Lawrence99}, \RWR~\cite{Gori06} and the original Katz
distance~\cite{Liben-Nowell03}. The algorithms and the parameters that
lead to the best accuracy in different experiments are summarized in
Table~\ref{tab:algos}.

\subsubsection{Results}

Figure~\ref{fig:expRWR} presents the accuracy obtained by the \WRWR
for different combinations of the parameters $d$ and $\lambda$ on the
four scenarios. The results show that extreme values of the parameter
are typically not the one that obtain the highest accuracy. On the
hide random experiment, \WRWR performs best with $d=0.75$ and $\lambda
= 0.5$. A similar combination set~($d=0.75, \lambda=0.9$) obtains a
high accuracy on the hide recent experiment. However it is best
processed with parameters $d=0.5$ and $\lambda=0.9$. As expected, the
hide earlier experiment is best solved using a low value of the
direction awareness parameter~($d=0.5, \lambda=0.1$). The future
prediction experiment is best solved by the $d=0.75, \lambda=0.5$
parameter set. Still using $d=0.5$ leads to solutions of reasonable
accuracy. It is interesting to notice that the hide random and future
prediction experiments show similar pattern while the hide recent and
hide earlier experiments show opposite patterns. This experiment tells
us that it is enough to set $\lambda$ as tunable for the service since
tuning $d$ has little impact once it is set to a reasonable
value. Most likely, setting $d$ as tunable will add only more
complexity and no significant improvement in the accuracy.

Figure~\ref{fig:expALL} presents a comparison of all the methods on
the same scenarios. Many algorithms are represented as horizontal
lines since they are not direction aware. The first remark is that
Cocoupling and CCIDF perform poorly on all four scenarios. Cocitation
performs the worse in the hide recent scenario and performs reasonably good
but not the best in the other three scenarios. These methods which only
consider counting and weighting of distance 2 edges at most from the
seeds are out-performed by the eigenvector based methods which take whole
graph into account.

Notice that \RWR performs well overall but for different values of the
damping parameter $d$. The performance of \WKatz is significantly
varying with the parameter set but it is important to notice that the
variations with the direction awareness parameter are similar to the
one observed on \WRWR. The results of \Katz are not explicitly
presented but can be read on \WKatz when $\lambda=0.5$. Notice that
\WKatz is always a better method that \Katz. \RWR achives the best
results when the query is generic~(on the hide random and future
prediction scenarios); however direction aware methods lead to higher
accuracy when the query is specific.

The previous experiments show that the method we proposed return
results of higher accuracy. However, these results do not allow us to
understand whether the methods return similar results or different
results. Table~\ref{tab:cm} presents the intersection matrix of the
different methods on four scenarios. Each method's parameters are set
to optimize the accuracy. The diagonal of the matrix shows the actual
accuracy of the methods. Other values show the percentage of the
intersection of two corresponding methods. For instance, one can read
that on the hide random scenario, \RWR has an accuracy of $51.30\%$
while CCIDF has an accuracy of $20.12\%$. The intersection between the
results of CCIDF and \RWR has an accuracy of $17.23\%$ indicating that
most of the relevant results returned by CCIDF were also results by
\RWR in that scenario. In the hide recent and hide random scenarios,
the proposed method clearly dominate the solution space. The other
methods do not add many new relevant suggestions.

\begin{table*}[t]
\caption{Intersection matrix of the results for (i) hide random, 
(ii) hide recent, (iii) hide earlier, and (iv) future prediction 
experiments.}  \scriptsize
\begin{center}
\addtolength{\tabcolsep}{-2.9pt}
\begin{minipage}{1.05 \columnwidth}
\begin{tabular}{@{}l|rrrrrrr@{}}
\multicolumn{1}{c|}{(i)} & $\WRWR$ & \textsc{P.R.} & $\WKatz$ & Katz$_\beta$ & Cocit & Cocoup & CCIDF \\ \hline
$\WRWR$ 		& \bf{40.62}&	38.96&	36.75&	33.81&	30.02&	13.48&	14.46\\
\textsc{P.R.} 	& ~	& \bf{51.31}&	43.59&	40.31&	35.18&	16.20&	17.23\\
$\WKatz$		& ~	& ~	& \bf{48.72}&	39.63&	35.99&	15.57&	15.31\\
Katz$_\beta$ 	& ~	& ~	& ~ & \bf{44.87}&	31.10&	17.17&	16.89\\
Cocit 			& ~	& ~	& ~ & ~	& \bf{42.57}&	11.53&	11.00\\
Cocoup			& ~	& ~	& ~ & ~	& ~	& \bf{19.47}&	15.04\\
CCIDF  			& ~	& ~	& ~ & ~	& ~	& ~	& \bf{20.13}\\
\multicolumn{8}{c}{}%
\end{tabular}
\end{minipage}%
\begin{minipage}{1.05 \columnwidth}
\begin{tabular}{@{}l|rrrrrrr@{}}
\multicolumn{1}{c|}{(ii)} & $\WRWR$ & \textsc{P.R.} & $\WKatz$ & Katz$_\beta$ & Cocit & Cocoup & CCIDF \\ \hline
$\WRWR$ 		& \bf{40.57}&	33.51&	31.68&	31.13&	7.86&	16.78&	19.92\\
\textsc{P.R.} 	& ~	& \bf{37.41}&	30.89&	31.37&	9.67&	17.19&	20.18\\
$\WKatz$		& ~	& ~	& \bf{38.18}&	35.72&	8.48&	19.28&	21.19\\
Katz$_\beta$ 	& ~	& ~	& ~ & \bf{37.18}&	9.35&	19.07&	21.06\\
Cocit 			& ~	& ~	& ~ & ~	& \bf{13.87}&	6.28&	5.96\\
Cocoup			& ~	& ~	& ~ & ~	& ~	& \bf{22.03}&	18.08\\
CCIDF  			& ~	& ~	& ~ & ~	& ~	& ~	&\bf{25.23}\\
\multicolumn{8}{c}{}%
\end{tabular}
\end{minipage}

\begin{minipage}{1.05 \columnwidth}
\begin{tabular}{@{}l|rrrrrrr@{}}
\multicolumn{1}{c|}{(iii)} & $\WRWR$ & \textsc{P.R.} & $\WKatz$ & Katz$_\beta$ & Cocit & Cocoup & CCIDF \\ \hline
$\WRWR$ 		& \bf{60.72}&	51.21&	56.92&	41.28&	46.61&	1.97&	2.35\\
\textsc{P.R.} 	& ~	& \bf{55.17}&	52.73&	40.39&	45.94&	1.88&	2.29\\
$\WKatz$		& ~	& ~	& \bf{65.11}&	42.69&	50.67&	2.21&	2.44\\
Katz$_\beta$ 	& ~	& ~	& ~ & \bf{43.04}&	39.53&	2.10&	2.35\\
Cocit 			& ~	& ~	& ~ & ~	& \bf{53.02}&	1.95&	2.09\\
Cocoup			& ~	& ~	& ~ & ~	& ~	& \bf{2.48}&	1.18\\
CCIDF  			& ~	& ~	& ~ & ~	& ~	& ~	&\bf{2.81}\\
\end{tabular}
\end{minipage}%
\begin{minipage}{1.05 \columnwidth}
\begin{tabular}{@{}l|rrrrrrr@{}}
\multicolumn{1}{c|}{(iv)} & $\WRWR$ & \textsc{P.R.} & $\WKatz$ & Katz$_\beta$ & Cocit & Cocoup & CCIDF \\ \hline
$\WRWR$ 		& \bf{39.08}&	28.75&	24.59&	20.82&	18.91&	5.68&	6.31\\
\textsc{P.R.} 	& ~	& \bf{51.48}&	32.55&	30.87&	24.50&	9.56&	10.57\\
$\WKatz$		& ~	& ~	& \bf{49.37}&	26.50&	30.66&	6.34&	5.21\\
Katz$_\beta$ 	& ~	& ~	& ~ & \bf{45.15}&	17.41&	13.99&	12.30\\
Cocit 			& ~	& ~	& ~ & ~	& \bf{48.65}&	3.41&	2.48\\
Cocoup			& ~	& ~	& ~ & ~	& ~	& \bf{25.22}&	14.78\\
CCIDF  			& ~	& ~	& ~ & ~	& ~	& ~	&\bf{24.27}\\
\end{tabular}
\end{minipage}%
\end{center}
\label{tab:cm}
\end{table*}%

The case of the future prediction scenario is different. The
intersection between the different methods often highlight that a
significant portion of the returned suggestion differ between the
algorithms. For instance, the intersection between \WRWR and
Cocoupling scores an accuracy of $5.68\%$ which is 5 times smaller
than the accuracy of Cocoupling ($25.22\%$) and 7.5 times smaller than
the accurary of \WRWR ($39.08\%$).

\subsubsection{Citation patterns}

For a better understanding of the difference between the accuracy obtain by
different methods, we did a study on the properties of the suggestions
returned by the methods and compare them to the properties of the
actual references within the papers. We argue that highly relevant
suggested papers should have similar patterns to the actual
references.

One feature to measure the citation patterns is the {\it clustering
  coefficient}~\cite{Watts98}. The clustering coefficient $C_v$ of
paper $v$ is computed as:
\[
C_v = \frac{|\{(i,j)\in E \mid i,j \in N_v \union \{v\}\}|}{|N_v| \times (|N_v| + 1)}, 
\]
where $N_v$ is the set of neighbor papers of $v$ which either cite $v$
or are cited by $v$. Intuitively, the clustering coefficient indicates
how close of being a clique a vertex and its neighbors are.

The other metric we consider is the PageRank~\cite{Brin98} of a vertex
which can be calculated by putting all vertices in \M during the
\RWR algorithm.

Figure~\ref{fig:ccpr} presents the cumulative density function of the
clustering coefficient and of the PageRank of the documents suggested
by each algorithm and of the hidden papers in the three hidden 
scenarios. The first observation is that on all charts the Cocitation
algorithm is an outlier. Also, CCIDF and Cocoupling are almost
indistinguishable on all charts. Interestingly, the clustering
coefficient of the hidden papers in the hide earlier scenario are
lower than in the hide random scenario and the clustering coefficient
of the hidden paper in the hidden recent scenario are the highest. The
trend is reverse with PageRank. Older papers have more time to become
famous so their PageRank is higher. And since they have more citations,
it is less likely that their neighbors are close to form a clique. This
highlights that papers published in different years have different
profiles, bolstering our claim that one should not use the very same
algorithm and parameters to look for them.

\begin{figure*}[t]
\centering
\includegraphics[height=5.5cm,page=3]{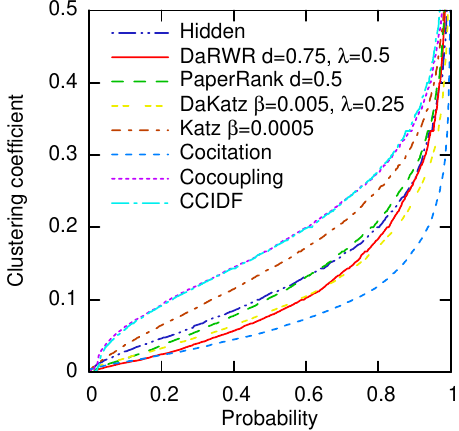}
\includegraphics[height=5.5cm,page=1,trim=.8cm 0 0 0,clip=true]{fig10.pdf}
\includegraphics[height=5.5cm,page=2,trim=.8cm 0 0 0,clip=true]{fig10.pdf}\\
\vspace{1.9em}
\includegraphics[height=5.5cm,page=3]{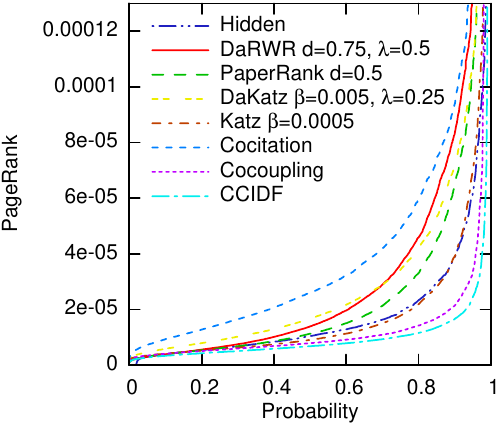}
\includegraphics[height=5.5cm,page=1,trim=1.2cm 0 0 0,clip=true]{fig11.pdf}
\includegraphics[height=5.5cm,page=2,trim=1.2cm 0 0 0,clip=true]{fig11.pdf}
\caption{Clustering coefficient (top) and Pagerank (bottom) of the suggested 
citations for the hide earlier (left), hide random (center), and hide recent
(right) experiments.}
\label{fig:ccpr}
\end{figure*}

For the hide random scenario, \RWR suggests papers of clustering
coefficient very similar to the hidden paper, while \WRWR, \Katz, and
\WKatz show a different but parallel trace. The PageRank distribution
of the algorithm shows a similar picture, except \Katz is close the
hidden paper and \RWR, \WRWR, and \WKatz are farther away.

In the hide recent scenario, most algorithms have a similar trace for
both the clustering coefficient and PageRank. \RWR and \Katz are
significantly different than their direction aware variants and the
trace of the hidden paper. Recall that \RWR and \Katz are also less
accurate than their direction aware variants on the hide recent
scenario. Having a similar trace is an important property but it is
not enough to reach a high accuracy. Indeed, Cocoupling and CCIDF
show a trace similar to the that of hidden papers in that scenario but
with less accuracy.

In the hide earlier scenario, the direction aware algorithm have
patterns similar to the hidden paper for both metric explaining the
high accuracy they reach. \RWR has a PageRank pattern similar to the
hidden paper but a different clustering coefficient pattern and it does not
reach the high accuracy level the direction aware algorithms
obtain. \Katz's pattern is similar to that of the hidden paper neither in
clustering coefficient nor on PageRank and it is the one with the lowest 
accuracy among all the eigenvector based methods.

This analysis shows that direction aware algorithms have overall
similar citation patterns. CCIDF and cocoupling have typically similar
citation patterns. The difference in accuracy of the eigenvector based
methods can be explained by the similarity in citation patterns between
the papers one is looking for and what is generated by the method. The
direction aware methods are more flexible and can be tuned to match
the property of the query leading to higher accuracy. The reasons of
success or failure of the non-eigenvector based methods (Cocitation,
Cocoupling, and CCIDF) seem to be unrelated to the citation pattern
metrics we considered.

\subsection{Relevance feedback experiments}
\label{sec:exprelfeed}

Relevance feedback is an important part of the recommendation system
since users may give positive and negative feedbacks on the results in
order to reach to desired papers or topics. In this test, 500 source
papers are randomly selected, and for each source paper $s$ the graph
is pruned by removing the papers published after $s$. Then, a target
paper $u$ is selected from the pruned graph, such that it is the most
relevant paper at distance $5$ from $u$. Assuming that a user can only
display 10 results at a time, we measure the number of pages that the
user has to go through until she reaches $t$. We compare the feedback
mechanism with the following idealized user behavior:

\begin{description}[leftmargin=1.2em,itemsep=-0.2em]
\vspace{-0.3em}
\item[{\bf No feedback:}] There is no feedback mechanism; therefore, user 
should keep looking the next page until she finds the target paper.
\item[{\bf Only positive feedback:}] Results are labeled as relevant and
  added to $\M$ in the next step or should not be displayed again.
\item[{\bf Only negative feedback:}] Results are labeled 
as irrelevant to be removed from the graph or should not be displayed again.
\item[{\bf Both positive and negative:}] Results are labeled as either 
relevant to be added to $\M$ or irrelevant to be removed from the graph.
\end{description}

Detailed results for that experiment are omitted. Using negative
feedback only reduces the number of pages one has to go through by
82.29\% in average and using positive feedback allows to reduce the
number of pages by 97.15\% in average. Using both negative and
positive feedback reduces the number of pages by 97.20\% in
average. This result shows that using the feedback mechanism allows to
significantly speedup the process of searching for specific references.

\subsection{Venue and reviewer recommendation experiments}
\label{sec:expvr}

The venue recommendation methods is tested on the assumption that a
paper is published in a venue where it is relevant. The following
protocol relies on this assumption. A source paper is randomly
selected and is removed from the graph as long as all subsequent
papers. The objective is to find the venue of the source paper in
$\R_{venue}$ containing $k = 10$ venues. We compare the performance of
our methods against a method commonly employed by researcher, which
consist in considering the top-10 most occurring venues of the paper
of interest; e.g., the \M set. We call this algorithm~{\bf Baseline~1}. 
Another algorithm, {\bf Baseline~2}, considers the venues of the paper at
distance 2 of the source paper: it returns the top-10 most occurring
venues in \M and the references and citation of these documents.

The reviewer recommendation experiment is based on the assumption that
``the authors are the best reviewers for the paper" (ignoring the
obvious \emph{conflict-of-interest}, and by best reviewers referring
to people that have the enough knowledge on this candidate paper). The
experiment is conducted similarly to the venue recommendation
experiment. A source paper is selected and is removed from the graph
as long as all subsequent papers. For a list $\R_{expert}$ which
contains $k = 25$ experts, we distinguish whether none of the authors
of the source paper is found, if any author is found or if all the
authors are found. Both baselines are defined in the same way as in the
venue recommendation experiment.

\begin{table}
\caption{Average accuracy of venue recommendation (VR) and reviewer 
recommendation (RR) experiments.}
\begin{center}
\begin{tabular}{l|c|c|c}
~ & {\bf VR} & \multicolumn{2}{c}{\bf RR}\\ \hline
~ & Accuracy@10 & Any@25 & All@25\\ \hline
$\WRWR$ & 63.2		& 76.4 & 48.19\\
$\RWR$ & 60.6		& 74.4 & 45.85\\
$\WKatz$ & 58.4		& 64.4 & 35.17\\ \hline %
Baseline~1 & 56.0	& 73.0 & 48.38\\
Baseline~2 & 60.0	& 72.6 & 44.04
\end{tabular}
\end{center}
\label{tab:vrrr}
\end{table}%

Table~\ref{tab:vrrr} presents the average accuracy of these methods
when run on 500 random (uniform) source papers. For venue
recommendation, the three proposed methods perform better than
Baseline~1 and \WRWR perform better than Baseline~2. The differences
are marginal (less than 10\%) but statistically significant. For
reviewer recommendation, \WRWR performs the best. Interestingly Baseline~2 
performs worse than Baseline~1 in both experiments.

\section{Conclusion and future work}
\label{sec:conclusion}

In this paper, we present direction aware algorithms for citation
recommendation. A direction aware model allows to tune the search for
finding more recent or more traditional documents. We developed two
algorithms based on the direction aware model, namely \WKatz and
\WRWR. We also suggest to use the classical random walk with restart
(\RWR) for academic recommendation. Experimentally, we confirmed that
the parameters can be easily set to browse the academic web of
knowledge. In our experiments, the direction aware algorithm we
propose outperforms the existing algorithms for citation
recommendation which are based only on the citation graph in
experiments that focus on finding either traditional or recent
papers. We implemented the algorithms in our webservice which
allows any researcher to upload a bibliography file and obtain
suggestions. This service is freely available and easy to use. Coupled
with our efficient algorithms, we believe that our service will become
a tool of major interest for researchers.

As future work, we want to improve our service both in theory and
practice. We are planning to test weighting schemes on edges to have a
better distribution of probability to papers with high quality. In
practice, we will improve the amount and the quality of the
bibliographic data by using existing techniques such as canopy
clustering and by obtaining data from more public academic databases. 
We are also planning to conduct an intensive user study to
obtain a real-world evaluation of the system.

\section*{Acknowledgments}
This work was partially supported by the U.S. Department of Energy 
SciDAC Grant DE-FC02-06ER2775 and NSF grants CNS-0643969, OCI-0904809 
and OCI-0904802.
The authors also would like to thank DBLP and CiteSeer for
making their data publicly available.

\bibliographystyle{abbrv}

\small
\bibliography{report}

\balancecolumns

\end{document}